\documentclass[11pt,letterpaper]{article}

\usepackage{hyperref}
\usepackage[numbers]{natbib}
\usepackage{xspace}
\usepackage{xcolor}
\usepackage{graphicx}
\usepackage{amsmath,amssymb,amsthm}
\usepackage[margin=1in]{geometry}
\usepackage{tikz}

\theoremstyle{plain}
\newtheorem{proposition}{Proposition}
\newtheorem{theorem}{Theorem}
\newtheorem{lemma}{Lemma}
\newtheorem{corollary}{Corollary}

\theoremstyle{definition}

\newtheorem{definition}{Definition}

\newcommand{\secref}[1]{Section~\ref{#1}}
\newcommand{\thmref}[1]{Theorem~\ref{#1}}

\newcommand{\proref}[1]{Proposition~\ref{#1}}
\newcommand{\figref}[1]{Figure~\ref{#1}}

\newcommand{\appref}[1]{Appendix~\ref{#1}}
\newcommand{\corref}[1]{Corollary~\ref{#1}}


\newcommand{\ie}{i.e.,\xspace}

\newcommand{\E}{\mathbb{E}}
\newcommand{\M}{\mathcal{M}}
\newcommand{\U}{\mathcal{U}}
\newcommand{\I}{\mathcal{I}}

\newcommand{\reals}{\mathbb{R}}
\newcommand{\Pf}{P_f}
\newcommand{\midd}{:}
\newcommand{\OPT}{\text{OPT}}

\newcommand{\naturals}{\mathbb{N}}
\newcommand{\F}{\mathcal{F}}
\newcommand{\cvec}[1]{{\mathbf{q}(#1)}}
\newcommand{\card}[2]{{q_{#1}(#2)}}

\newcommand{\wpm}{{\tau}}

\sloppy

\begin{document}

\title{Polymatroid Prophet Inequalities}

\author{%
	Paul D\"{u}tting\thanks{%
	Department of Computer Science, Cornell University, 136 Hoy Road, Ithaca, NY 14850, USA. Email: \texttt{paul.duetting@cornell.edu}. Research supported by an SNF Postdoctoral Fellowship.}
	\and 
	Robert Kleinberg\thanks{%
	Department of Computer Science, Cornell University, 124 Hoy Road, Ithaca, NY 14850, USA. Email: \texttt{rdk@cs.cornell.edu}. Supported in part by NSF award AF-0910940, AFOSR grant FA9550-09-1-0100, a Microsoft Research New Faculty Fellowship, and a Google Research Grant.}
}

\date{}

\maketitle

\begin{abstract}
Consider a gambler and a prophet who observe a sequence of independent, non-negative numbers. The gambler sees the numbers one-by-one whereas the prophet sees the entire sequence at once. The goal of both is to decide on fractions of each number they want to keep so as to maximize the weighted fractional sum of the numbers chosen.

The classic result of Krengel and Sucheston (1977-78) asserts that if both the gambler and the prophet can pick one number, then the gambler can do at least half as well as the prophet. Recently, Kleinberg and Weinberg (2012) have generalized this result to settings where the numbers that can be chosen are subject to a matroid constraint.

In this note we go one step further and show that the bound carries over to settings where the fractions that can be chosen are subject to a polymatroid constraint. This bound is tight as it is already tight for the simple setting where the gambler and the prophet can pick only one number. An interesting application of our result is in mechanism design, where it leads to improved results for various problems.
\end{abstract}

\section{Introduction} \label{sec:intro}

Prophet inequalities compare the performance of an online algorithm to the optimum offline algorithm in settings that involve making selections from a sequence of random elements. The online algorithm knows the distribution from which the elements will be sampled, while the optimum offline algorithm knows the sequence of sampled elements. Prophet inequalities thus bound the relative power of online and offline algorithms in Bayesian settings. Not surprisingly, they play an important role in the analysis of online and offline algorithms in these settings. A slightly less obvious application is in algorithmic mechanism design, where they are used to design simple yet approximately optimal mechanisms. 

A classic result of \citet{KrSu77,KrSu78} shows that when both the online algorithm and the offline algorithm get to pick exactly one number, then the online algorithm can do at least half as well as the offline algorithm. More formally, if $w_1,\dots,w_n$ is a sequence of independent, non-negative, real-valued random variables satisfying $\E[\max_i w_i] < \infty$, then there exists a stopping rule $\tau$ such that
\[
	\E[w_\tau] \ge \frac{1}{2} \cdot \E[\max_i w_i].
\]
This bound is, for example, achieved by an elegant algorithm of \citet{SaCa84}. This algorithm chooses a threshold $T$ such that $\Pr(\max_i w_i > T) = \frac{1}{2}$, and selects the first element whose weight exceeds this threshold. Alternatively, as described by \citet{KlWe12}, this bound can be obtained by choosing threshold $T = \E[\max_i X_i]/2$ and picking the first element whose weight exceeds the threshold.

\citet{KlWe12} recently extended this result to matroid settings. In a matroid setting we are given a ground set $\U$ and a non-empty downward-closed family of independent sets $\I \subseteq 2^{|\U|}$ satisfying the exchange axiom: for all pairs of sets $I, J \in \I$ and $|I| < |J|$ there exists an element $j \in J$ such that $I \cup \{x\} \in \I$. For these settings they prove that if both the online and the offline algorithm have to pick an independent set of numbers, then the online algorithm again can do at least half as well as the offline algorithm. More formally, if $w_1,\dots,w_n$ is a sequence of independent, non-negative, real-valued random variables satisfying $\E[\max_i w_i] < \infty$, then there is a way to pick $A \in \I$ in an online fashion such that
\[
	\E\left[\sum_{i \in A} w_i\right] \ge \frac{1}{2} \cdot \E\left[\max_{B \in \I} \sum_{i \in B} w_i\right].
\]

An important application of the original result of \citet{KrSu77,KrSu78} and its generalization by \citet{KlWe12} is in algorithmic mechanism design, where --- as was first observed by \citet{ChHa10} --- prophet inequalities can be used to prove performance guarantees for simple, truthful mechanisms based on sequential posted pricing.

\paragraph{Our Contribution}
We extend the previous results to polymatroid settings. In a polymatroid setting we are given a ground set $\U$ and a submodular\footnote{A set function $f$ is submodular if for all $X \subset Y \subseteq \U$, $f(X \cup Y) + f(X \cap Y) \le f(X) + f(Y)$.} set function $f : 2^{\U} \to \reals$. A vector $x = (x_1,\dots,x_n)$ is feasible if $x \in \Pf = \{x \mid \sum_{i \in S} x_i \le f(S) \text{ for all } S \subseteq \U\}$.  We will restrict ourselves to integer-valued set functions for ease of exposition; our results trivially extend to rational-valued functions by scaling. For this setting we prove that if the goal of both the online and the offline algorithm is to maximize $w \cdot x$ over feasible $x$ and $w = (w_1,\dots,w_n)$ are the elements of the random sequence, then the online algorithm can again do at least half as well as the offline algorithm. More formally, if $w_1,\dots,w_n$ is a sequence of independent, non-negative, real-valued random variables satisfying $\E[\max_i w_i] < \infty$, then there exists a way to choose a feasible $x$ in an online fashion (\ie choosing $x_i$ when $w_1,\ldots,w_i$ have been revealed but $w_{i+1},\ldots,w_n$ have not yet been revealed) such that
\[
	\E\left[w \cdot x\right] \ge \frac{1}{2} \cdot \E\left[\max_{y \in \Pf} w \cdot y\right].
\]
We prove this result by reducing the polymatroid setting with independent weights to a matroid setting with limited correlation between weights. Specifically, we transform an input sequence to the polymatroid problem into an input sequence to the matroid problem by repeating the (element, weight) pairs in the input sequence to the polymatroid problem. We show that this leads to a matroid consisting of several blocks, each corresponding to an element of the ground set of the polymatroid. The resulting distributions of weights have the property that each element of a block is associated with a weight that is independent from the weights of elements of other blocks, but that is the same for all elements of a block. We call the resulting matroid a block-structured matroid, and the resulting distributions block-structured distributions. Our main technical contribution apart from the reduction itself is to prove that the Kleinberg-Weinberg algorithm, although originally developed for the independent weights case, also applies to block-structured matroids and block-structured distributions. This result is not only the main building block of our result for polymatroids, but it is also interesting in its own right as it constitutes a prophet inequality for a setting in which the weights can be correlated. Such inequalities are rare in the literature, and those that have appeared in the past required either negative dependence~\cite{Bo91,RiSaCa87}, martingale moment sequences~\cite{CoKe86}, or additive rather than multiplicative bounds on the difference between the online and offline algorithm's expected payoffs~\cite{HiKe83}. Our prophet inequality allows for a different --- though, unfortunately, still very stringent --- restriction on the type of correlation allowed.

An important implication of our result are novel approximation results for mechanism design problems with polymatroid structure. This class of problems comprises, amongst others, position auctions \cite{GoMi12} and spatial markets \cite{BaNi09}.

\paragraph{Related Work}
We have already described the result by \citet{KrSu77,KrSu78} for the case in which both the online algorithm and the offline algorithm are allowed to pick one number, showing that the online algorithm can do at least half as well as the offline algorithm. This bound is tight. This result has been extended to the case where both the online algorithm and the offline can pick $k$ numbers by \citet{Al11}, showing that the online-to-offline ratio is at most $1-1/(\sqrt{k+3}).$ This matches the aforementioned tight bound when $k=1$, and it remains nearly tight for $k>1$, in the sense that a ratio of $1 - o(1/\sqrt{k})$ is known to be unattainable. Finally, as already mentioned, \citet{KlWe12} have extended the bound of $2$ to settings where the elements picked must form a matroid. This bound is tight in the sense that it comprises the case where both the online and offline algorithm have to pick one number as a special case, for which this bound is known to be tight. 

\citet{HaKl07} observed the following relationship between prophet inequalities and algorithmic mechanism design: algorithms used to prove prophet inequalities can be interpreted as truthful online auction mechanisms, and the prophet inequality in turn can be interpreted as the mechanism's approximation guarantee. \citet{ChHa10} observed an even subtler relationship between the two topics: questions about the approximability of offline Bayesian optimal mechanisms by sequential posted-price mechanisms could be translated into questions about prophet inequalities, via the use of virtual valuation functions. \citet{Al11} and \citet{KlWe12}, armed with stronger prophet inequalities, deepen this relationship even further.

Another related line of literature is work on secretary problems, which also concerns relations between optimal offline stopping rules and suboptimal online stopping rules, but under the assumption of a randomly ordered input rather than independent random numbers in a fixed order. While the polymatroid prophet inequality that we solve here contains the matroid prophet inequality problem as a special case, the matroid secretary problem introduced by \citet{BaIm07} remains largely unsolved despite recent progress.

A final related direction is work on exponential-sized Markov decision processes (MDP's) \cite{DeGo04,GuMu09,GuMu10}. The connection here is that algorithms for prophet inequalities can be formulated as exponential-sized MDP's, whose state reflects the entire set of decisions made prior to a specified point during the algorithm's execution. Most of the algorithms with provable approximation guarantees for exponential-sized MDP's are LP-based, while our algorithm is combinatorial.

\section{Preliminaries} \label{sec:preliminaries}

\paragraph{Bayesian Online Selection Problems}
In a \emph{Bayesian online selection problem} we are given a ground set $\U$ and for each $x \in \U$ a probability distribution $F_x$ with support $\reals_+$. This induces a probability distribution over functions $w: \U \rightarrow \reals_+$ in which the random variables $\{w(x) \midd x \in \U\}$ are independent and $w(x)$ has distribution $F_x$. We refer to $w(x)$ as the \emph{weight} of $x$. The goal is to choose a vector $z \in \reals^{|\U|}$ that maximizes $w \cdot z = \sum_{x \in \U} w(x) \cdot z(x)$. For a given assignment of weights we use $\OPT(w)$, or simply $\OPT$, to denote the optimal value. The vector $z$ will typically be restricted to come from a space of feasible vectors $\F \subseteq \reals^{|\U|}.$ One common restriction is $\F \subseteq \{0,1\}^{|\U|}$ in which case $z_i \in \{0,1\}$ can be thought of as encoding membership to a subset $A \subseteq \U$. Two further restrictions, matroids and polymatroids, are discussed below.

An \emph{input sequence} is a sequence $\sigma$ of ordered pairs $(x_i,w_i)_{i=1,..|\U|}$ such that $(x_i,w_i) \in \U \times \reals_+$ and every element of $\U$ occurs exactly once in the sequence. A \emph{deterministic online selection algorithm} is a function $z$ mapping every input sequence $\sigma$ to a vector $z(\sigma) \in \F$ such that for any pair of input sequences $\sigma, \sigma'$ that match on the first $i$ pairs $(x_1,w_1), \dots, (x_i,w_i)$ we have $z_j(\sigma) = z_j(\sigma')$ for all $1 \le j \le i.$ An \emph{online-weight adaptive adversary} that has chosen $x_1\dots,x_i$ and has learned about $w(x_1), \dots, w(x_{i-1})$ chooses $x_i$ without knowing $w(x_i)$.

\paragraph{Matroids} 
A \emph{matroid} $\M$ is a pair $(\U,\I)$, where $\U$ is a set (called the \emph{ground set}) and $\I \subseteq 2^\U$ is a non-empty, downward-closed family of subsets of $\U$ (called the \emph{independent sets}) satisfying the matroid exchange axiom: for all pairs of sets $I, J \in \I$ such that $|I| < |J|$ there exists an element $x \in J$ such that $I \cup \{x\} \in \I.$ A maximal independent set $I \in \I$ is called a \emph{basis}.

\paragraph{Polymatroids} 
A \emph{polymatroid} $\Pf$ on ground set $\U$ is given by $\Pf = \{z \in \reals^{|\U|} \midd z(T) \le f(T) \text{ for all } T \subseteq \U \}$, where $z(T) = \sum_{i \in T} z_i$ and $f: 2^\U \rightarrow \reals$ is a submodular set function. A set function $f$ is \emph{submodular} if for every two sets $X \subset Y \subseteq \U$,
\[
	f(X \cap Y) + f(X \cup Y) \le f(X) + f(Y).
\]
A submodular function is integer-valued if for every subset $X \subseteq \U$, $f(X) \in \naturals.$

\paragraph{Notation}
For a real number $z$, we use $z^+$ to denote $\max\{z,0\}$. For a vector $w$ with entries indexed by a set $\U$, and for any set $S \subseteq \U$, we use $w(S)$ to denote $\sum_{i \in S} w_i$.

\section{Algorithm for Polymatroids}

Our algorithm for the polymatroid prophet inequality is based on the algorithm of Kleinberg and Weinberg~\cite{KlWe12} for the matroid prophet inequality. We begin by defining {\em block-structured matroids, block-restricted weight distributions}, and {\em block-restricted adversaries}. The crux of our analysis is a theorem (whose proof is deferred to \secref{sec:proof} below) asserting that the Kleinberg-Weinberg algorithm, applied to block-structured matroids with a block-restricted adversary, recovers at least half of the optimal reward. Armed with this theorem, we design our algorithm for polymatroids by reducing to the block-structured matroid case; we have tailored the definition of block-restricted weight distribution and block-structured adversary so that they capture the type of input sequences generated by our reduction.

\subsection{Block-Structured Matroids}\label{sec:blocks}

We begin by defining block-structured matroids and showing that to every polymatroid defined by an integer-valued submodular function there is an associated block-structured matroid. Afterwards, we define block-restricted weight distribution and block-restricted adversary.

\begin{definition}\label{def:block-structured-matroids}
A \emph{block-structured matroid} is one whose ground set is partitioned into blocks $B_1,...,B_n$ such that the independence relation is preserved under permutations of the ground set that preserve the pieces of the partition.

For a set $S \subseteq B_1 \cup \cdots \cup B_n$ we define its cardinality vector $\cvec{S} = (\card{1}{S},\card{2}{S},\ldots,\card{n}{S})$ by setting $\card{i}{S} = |S \cap B_i|$ for $i=1,\ldots,n$.
\end{definition}

\begin{lemma}\label{lem:reduction}
Suppose $f$ is a submodular function on ground set $\U = \{u_1,\ldots,u_n\}$, taking values in $\{0,1,\ldots,M\}$. There is a block-structured matroid $\M_f$ on ground set $\U \times [M]$ with blocks $B_i = \{u_i\} \times [M] \; (i=1,\ldots,n)$, whose independent sets are those $S$ satisfying $\cvec{S} \in \Pf$.
\end{lemma}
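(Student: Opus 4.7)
The target is to verify that $\I := \{S \subseteq \U \times [M] \midd \cvec{S} \in \Pf\}$ forms a matroid on $\U \times [M]$ whose blocks are the $B_i$'s. Block-structuredness will follow immediately because independence depends only on $\cvec{S}$, which is unchanged by any permutation that preserves the partition. Non-emptiness ($\emptyset \in \I$, since $\cvec{\emptyset} = 0$ and $f \ge 0$) and downward closure ($S' \subseteq S$ implies $\cvec{S'} \le \cvec{S}$ coordinatewise) are routine. The real work will be the matroid exchange axiom.

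My approach is the classical polymatroid uncrossing argument. The first step is to show that for any $I \in \I$, the family of tight sets $\mathcal{T}_I := \{T \subseteq \U \midd \sum_{u_i \in T} \card{i}{I} = f(T)\}$ is closed under union and intersection. This follows from submodularity of $f$ combined with modularity of $T \mapsto \sum_{u_i \in T} \card{i}{I}$: when $T_1, T_2 \in \mathcal{T}_I$, the chain
\[
	f(T_1 \cup T_2) + f(T_1 \cap T_2) \le f(T_1) + f(T_2) = \sum_{u_i \in T_1} \card{i}{I} + \sum_{u_i \in T_2} \card{i}{I} = \sum_{u_i \in T_1 \cup T_2} \card{i}{I} + \sum_{u_i \in T_1 \cap T_2} \card{i}{I}
\]
collides with the individual constraints $\sum_{u_i \in T_1 \cup T_2} \card{i}{I} \le f(T_1 \cup T_2)$ and $\sum_{u_i \in T_1 \cap T_2} \card{i}{I} \le f(T_1 \cap T_2)$, forcing both to be equalities.

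For exchange, I take $I, J \in \I$ with $|I| < |J|$ and argue by contradiction that no element of $J \setminus I$ extends $I$ inside $\I$. For every index $i$ with $\card{i}{J} > \card{i}{I}$ I can pick some $x \in (J \cap B_i) \setminus I$; the assumption $I \cup \{x\} \notin \I$ is equivalent to $\cvec{I} + e_i \notin \Pf$, and since $f$ is integer-valued and $\cvec{I}$ has integer entries, this forces some $T_i \in \mathcal{T}_I$ with $u_i \in T_i$. Let $T^* := \bigcup_i T_i$ over such $i$, which remains tight by closure. Every $u_i$ with $\card{i}{J} > \card{i}{I}$ lies in $T^*$, so $\card{i}{J} \le \card{i}{I}$ for every $u_i \notin T^*$. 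Splitting $|J|$ into contributions from $T^*$ and its complement gives
\[
	|J| = \sum_{u_i \in T^*} \card{i}{J} + \sum_{u_i \notin T^*} \card{i}{J} \le f(T^*) + \sum_{u_i \notin T^*} \card{i}{I} = \sum_{u_i \in T^*} \card{i}{I} + \sum_{u_i \notin T^*} \card{i}{I} = |I|,
\]
contradicting $|J| > |I|$.

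The step I expect to be the main obstacle is coordinating a single tight set $T^*$ that simultaneously blocks exchange for every block with excess cardinality; absorbing all the per-block tight sets $T_i$ into one still-tight union is exactly what uncrossing buys us. Integer-valuedness of $f$ is essential at the moment of translating non-exchangeability at $u_i$ into the existence of a tight set containing $u_i$---for non-integer $f$ the gap $f(T) - \sum_{u_i \in T}\card{i}{I}$ could lie strictly between $0$ and $1$ and the argument would break.
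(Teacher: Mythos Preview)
Your proof is correct and follows essentially the same route as the paper: both reduce the exchange axiom to the lattice property of tight sets (uncrossing via submodularity), then exploit the size gap $|J|>|I|$ to locate a block index $i$ that is not covered by any tight set for $I$, so that incrementing the $i$-th coordinate of $\cvec{I}$ stays in $\Pf$. The only cosmetic difference is that the paper argues directly---it takes $R$ to be the union of \emph{all} tight sets, observes $\sum_{u_i\in R}\card{i}{J}\le f(R)=\sum_{u_i\in R}\card{i}{I}$ while $|J|>|I|$, and extracts an $i\notin R$ with $\card{i}{J}>\card{i}{I}$---whereas you run the same counting as a contradiction, building $T^*$ only from tight sets witnessing failed exchanges; the two are logically equivalent.
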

\begin{proof}
The bulk of the proof is devoted to proving that the independent sets constitute a matroid. The criterion for a set to be independent depends only on the cardinality of its intersection with each block $B_i$, hence is clearly preserved under permutations that preserve the blocks. Thus, the fact that the matroid is block-structured will follow trivially once we have established that it is indeed a matroid.

Clearly the empty set is independent and a subset of an independent set is independent. To verify the matroid exchange axiom suppose that we have two sets $S,T \in \I$ such that $|S| < |T|$. 
Define additive set functions $x,y$ on subsets of $\U$ by
\[
x(A) = \sum_{i \in A} \card{i}{S}, \quad
y(A) = \sum_{i \in A} \card{i}{T}.
\]
By our assumption that $S,T \in \I$ we have $x(A) \leq f(A)$ and $y(A) \leq f(A)$ for all $A$. Define a set $A$ to be $x$-tight if $x(A) = f(A)$. For any two sets $A,B$ we have $x(A) + x(B) = x(A \cap B) + x(A \cup B)$, and from this it is easy to deduce that the union and intersection of $x$-tight sets is $x$-tight. In particular, the set of all elements that belong to $x$-tight sets, is itself an $x$-tight set. Denote that set by $R$. We have $y(R) \leq f(R) = x(R)$. On the other hand, for the ground set $\U$ we have $y(\U) = |T| > |S| = x(\U)$. Hence, there must be some element $i \not\in R$ such that $y(\{i\}) > x(\{i\})$. Let $z$ be any element of $T \cap B_i$ that does not belong to $S \cap B_i$. The set $S \cup \{z\}$ is an independent set in the matroid, because $i$ does not belong to any $x$-tight set and hence the vector $x$ remains in the polymatroid after incrementing its $i^{\mathrm{th}}$ coordinate. This verifies the matroid exchange axiom.
\end{proof}


\begin{definition}\label{def:block-restricted-weight-distribution}
A \emph{block-restricted weight distribution} on a block-structured matroid is a joint distribution of weights for its elements, such that the elements of a block receive identical weights, and the weight assignments to different blocks are mutually independent.
\end{definition}

\begin{definition}\label{def:block-restricted-adversary}
A \emph{block-restricted adversary} is one who is restricted to choose an ordering of the input sequence in which the elements of each block appear consecutively, and after any proper subset of the blocks have been presented, the choice of which block is presented next may only depend on the weights of elements that have already been presented.
\end{definition}

Note that when all blocks have size 1, a block-structured matroid is simply a matroid, and a block-restricted distribution is simply an independent distribution. Furthermore, a block-restricted adversary is exactly the same as the notion of \emph{online weight-adaptive adversary} defined in~\cite{KlWe12}. Thus, the special case in which all blocks have size 1 is precisely the setting of the matroid prophet inequality of~\cite{KlWe12}. 

\subsection{Prophet Inequality for Block-Restricted Distributions and Adversaries}\label{sec:matroids}

Consider a block-restricted matroid $(\U,\I)$. Let $w,w': \U \rightarrow \reals_+$ denote two assignments of weights to the elements of $\U$ sampled independently from a block-restricted weight distribution. For a given input sequence $\sigma = (x_1,w(x_1)), \dots, (x_n,w(x_n))$ we compare the set $A = A(\sigma)$ selected by the algorithm to the basis $B$ that maximizes $w'(B)$. 

The matroid exchange axiom guarantees the existence of a partition of $B$ into disjoint subsets $C, R$ such that $A \cup R$ is also a basis of $\M$. Among all such partitions, let $C(A), R(A)$ denote the one that maximizes $w'(R).$ Let $g(A) = w'(R(A))$.

The selection algorithm is as follows: In step $i$, having already selected the (possibly empty) set $A_{i-1}$, we set threshold $T_i = \infty$ if $A_{i-1} \cup \{x_i\} \not\in \I$, and otherwise
\begin{align}
	T_i 
	&= \frac{1}{2} \cdot \E[g(A_{i-1}) - g(A_{i-1} \cup \{x_i\})] \nonumber \\
        &= \frac{1}{2} \cdot \E[w'(R(A_{i-1})) - w'(R(A_{i-1} \cup \{x_i\}))]. \label{eq:t-1} \\
	&= \frac{1}{2} \cdot \E[w'(C(A_{i-1} \cup \{x_i\})) - w'(C(A_{i-1}))]. \label{eq:t-2}
\end{align}
We select element $x_i$ if and only if $w_i \ge T_i.$

\begin{theorem}\label{thm:matroid}
For every block-restricted matroid $(\U,\I)$ with block-restricted weight distribution there is a deterministic online selection algorithm that achieves the following performance guarantee against block-restricted adversaries:
\[
	\E[w(A)] \ge \frac{1}{2} \cdot \OPT.
\]
\end{theorem}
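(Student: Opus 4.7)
The plan is to adapt the Kleinberg-Weinberg (KW) proof of the matroid prophet inequality to our block-restricted setting. Because a block-structured matroid is an honest matroid, the matroid exchange axiom and hence the derived notions $R(\cdot)$, $C(\cdot)$, and $g(\cdot)$ are well-defined as in KW. Following KW, I would decompose the theorem into two complementary bounds,
\[
\E[w(A_n)] + \E[g(A_n)] \;\ge\; \OPT \qquad\text{(I)}\qquad\text{and}\qquad \E[g(A_n)] \;\le\; \E[w(A_n)] \qquad\text{(II)},
\]
whose sum is $2\E[w(A_n)] \ge \OPT$.

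Inequality (II) is a symmetry inequality arising from swapping $w$ and $w'$. Because $w$ and $w'$ are i.i.d.\ draws from the block-restricted distribution, the swap preserves the joint law; moreover, any block-restricted adversary strategy remains a valid block-restricted strategy after the swap, since it reacts only to the weights revealed so far and block-restricted distributions are exchangeable block-by-block. Hence the KW symmetry argument transfers with only notational changes.

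For (I) I would begin with a bookkeeping reduction. The key observation is that $T_i$ is a deterministic function of the history through step $i-1$ together with the adversary's choice $x_i$, because a block-restricted adversary commits to $x_i$ without knowledge of $w_i$. Combined with (\ref{eq:t-1}) and the tower property, this yields $\E[g(A_{i-1}) - g(A_i)] = 2\E[T_i \mathbf{1}(w_i \ge T_i)]$, and telescoping gives $\OPT - \E[g(A_n)] = 2\sum_i \E[T_i \mathbf{1}(w_i \ge T_i)]$. Thus (I) reduces to $\sum_i \E[w_i \mathbf{1}(w_i \ge T_i)] \ge 2\sum_i \E[T_i \mathbf{1}(w_i \ge T_i)]$, equivalently $\sum_i \E[(w_i - T_i)^+] \ge \sum_i \E[T_i \mathbf{1}(w_i \ge T_i)]$. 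This is where the KW-style matroid exchange argument is deployed: for each element $y \in R(A_n)$ belonging to block $B_k$, one matches $y$ with a step $j$ processing block $B_k$ and charges a portion of $w'(y)$ against $\E[(w_j - T_j)^+]$. The block-symmetry from \defref{def:block-structured-matroids}, namely that independence depends on $A$ only through the cardinality vector $\cvec{A}$, is what makes the charging scheme well-defined across the interchangeable elements of each block.

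The main obstacle, and the chief new difficulty relative to KW, is the correlation structure within a block. All elements of block $B_k$ share a common weight $v$, so a single rejection $v < T_j$ at step $j$ propagates to every subsequent element of the block (the thresholds $T_i$ are non-decreasing in $i$ within a block, by submodularity of the matroid rank function applied to block-symmetric additions). Multiple elements of $R(A_n) \cap B_k$ can therefore be rejected collectively, and the surpluses $(w_j - T_j)^+$ accumulated at the accepted steps within $B_k$ must suffice to cover all the rejected ones. I expect the KW exchange argument to close block-by-block: the block-symmetry of $g$ ensures that $\sum_{j \text{ in } B_k} \E[g(A_{j-1}) - g(A_j)] = 2\sum_{j \text{ in } B_k} \E[T_j \mathbf{1}(v \ge T_j)]$ already accounts for $\E[w'(R(A_n) \cap B_k)]$ in aggregate, so a per-block matching between elements of $R(A_n) \cap B_k$ and processing steps within $B_k$ delivers the charge despite the within-block correlation.
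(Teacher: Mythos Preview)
Your decomposition into (I) and (II) is not the Kleinberg--Weinberg structure, and neither half is established by what you sketch. For (II), swapping the i.i.d.\ samples $w\leftrightarrow w'$ only yields the distributional identity $\E[w'(R_{w'}(A(w)))]=\E[w(R_{w}(A(w')))]$; it does not compare $\E[w'(R(A))]$ with $\E[w(A)]$. There is no such ``symmetry inequality'' in KW. For (I), your telescoping is fine and indeed gives $\OPT-\E[g(A_n)]=2\,\E\bigl[\sum_{i\in A}T_i\bigr]=\E[w'(C(A))]$, so (I) reduces to $\E\bigl[\sum_i (w_i-T_i)^+\bigr]\ge \tfrac12\,\E[w'(C(A))]$. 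But the charging scheme you then describe matches elements of $R(A_n)$ against the surplus---that is the argument for bounding $\tfrac12\,\E[w'(R(A))]$, not $\tfrac12\,\E[w'(C(A))]$. The paper (like KW) never proves your (I) and (II) separately; it bounds $\sum_{i\in A}T_i$ by $\tfrac12\,\E[w'(C(A))]$ and the surplus by $\tfrac12\,\E[w'(R(A))]$, then adds.

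The decisive missing ingredient is the surrogate threshold $t_i$. Within a block with common weight $v$, once you are past the first element the history already contains $v$, so $T_i$ depends on $w_i=v$ and the swap $\E[(w_i-T_i)^+]=\E[(w'(x_i)-T_i)^+]$ is illegitimate. The paper's \lemref{lem:thresholds} defines $t_i$ from $A^{i-1}=A_{i_0-1}\cup\{x_{i_0},\dots,x_{i-1}\}$ (all block elements forcibly added), proves the $t_i$ are non-decreasing and depend only on the pre-block history, and establishes the pointwise identity $(w_i-T_i)^+=(w_i-t_i)^+$; \corref{cor:fix} then gives the swap $\E[(w_i-T_i)^+]=\E[(w'(x_i)-t_i)^+]$. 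The charging against $R(A)$ is carried out with these $t_i$ and a block-aligned set $R'(A)$ in \proref{pro:balanced-pt-2}, using the block automorphism $\tau$ and submodularity of $g$. Your final paragraph correctly names the correlation obstacle, but ``I expect the KW exchange argument to close block-by-block'' is precisely the content of \lemref{lem:thresholds} and \proref{pro:balanced-pt-2}, and that mechanism is absent from your sketch.
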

Before providing a proof of this theorem in \secref{sec:proof}, we show how it can be used to derive a prophet inequality for polymatroids.

\subsection{A Prophet Inequality for Polymatroids}\label{sec:polymatroids}

The algorithm that achieves the prophet inequality in the polymatroid setting (with rational-valued submodular function $f$) does so by reducing the problem to the block-structured matroid setting with the matroid $\M_f$ defined in Lemma~\ref{lem:reduction}.

If in the polymatroid setting the elements are presented in order $u_1,\dots,u_n$, then the reduction constructs an input sequence in the matroid setting by presenting the elements in order $(u_1,1), (u_1,2),\dots, (u_2,1), (u_2,2), \dots$ (lexicographic order, $\U$ coordinate first). If in the polymatroid setting the weight of element $u_i$ is $w_i$ then element $(u_i,j)$ is presented in the matroid setting with weight $w_i$. If the matroid algorithm, while processing elements $(u_i,1),(u_i,2),\ldots,(u_i,M)$, selects a subset $\{u_i\} \times S_i$, then the polymatroid algorithm when processing $u_i$ sets $z_i = |S_i|.$

\begin{theorem}
For every polymatroid $\Pf$ defined by a rational-valued submodular function $f$ there exists a deterministic online selection algorithm that satisfies the following performance guarantee against online weight-adaptive adversaries:
\[
	\E \left[ \sum_i w_i \cdot z_i \right] \ge \frac{1}{2} \cdot \OPT.
\]
\end{theorem}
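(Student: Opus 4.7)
The plan is to invoke the reduction sketched just before the theorem statement: apply \lemref{lem:reduction} to obtain a block-structured matroid from the polymatroid, run the Kleinberg--Weinberg algorithm analyzed in \thmref{thm:matroid}, and translate its output back. First I dispose of rationality: multiplying $f$ by a common denominator $d$ of its finitely many values produces an integer-valued submodular $f' = d \cdot f$ with $P_{f'} = d \cdot \Pf$. A $\tfrac{1}{2}$-approximate online algorithm for $P_{f'}$ (with weights unchanged) yields a $\tfrac{1}{2}$-approximate online algorithm for $\Pf$ by scaling the chosen vector by $1/d$. So assume $f$ is integer-valued with maximum $M$, and let $\M_f$ be the block-structured matroid from \lemref{lem:reduction} on ground set $\U \times [M]$ with blocks $B_i = \{u_i\} \times [M]$.

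Next I check that the reduction generates a legitimate input to \thmref{thm:matroid}. The polymatroid weights $w_1,\dots,w_n$ are independent with $w_i \sim F_{u_i}$; the reduction assigns weight $w_i$ to every element of $B_i$, so within each block weights are identical and across blocks they are independent, matching \defref{def:block-restricted-weight-distribution}. An online weight-adaptive adversary for the polymatroid chooses the next $u_i$ as a function of the weights of the elements presented before it; presenting block $B_i$ consecutively right after $u_i$ is chosen produces a block-restricted adversary in the sense of \defref{def:block-restricted-adversary}, since the adversary's choice of which block to present next depends only on the weights of previously presented elements, and the within-block order is irrelevant (all weights in a block are equal).

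Finally I translate the guarantee. Let $A$ be the independent set of $\M_f$ produced by the algorithm of \thmref{thm:matroid} and set $z_i = \card{i}{A}$. By \lemref{lem:reduction} the cardinality vector $\cvec{A} = (z_1,\dots,z_n)$ lies in $\Pf$, so $z$ is feasible. Because every element of $B_i$ carries weight $w_i$, the matroid reward satisfies $\sum_i w_i \card{i}{A} = \sum_i w_i z_i$. For the optima, integrality of $f$ forces every vertex of $\Pf$ to be integer, so $\max_{y \in \Pf} w \cdot y$ is attained at some $y^* \in \Pf \cap \naturals^{|\U|}$; taking any $y^*_i$ elements of $B_i$ yields an independent set $S^* \in \I$ with matroid weight $\sum_i w_i y^*_i = w \cdot y^*$, and conversely every independent set of $\M_f$ has cardinality vector in $\Pf$ with matching weight. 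Hence the matroid OPT equals the polymatroid OPT, and \thmref{thm:matroid} delivers $\E[\sum_i w_i z_i] \ge \tfrac{1}{2}\,\OPT$.

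The main point to get right is aligning the adversary models across the reduction: verifying that an online weight-adaptive polymatroid adversary really induces a block-restricted matroid adversary and that no slack enters from within-block ordering. This is essentially free because the reduction assigns every element of a block the same weight; otherwise \thmref{thm:matroid} --- which permits identical weights within a block but demands independence \emph{across} blocks --- would not apply.
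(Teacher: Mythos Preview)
Your proposal is correct and follows essentially the same route as the paper: reduce to the integer case by scaling, invoke \lemref{lem:reduction} to pass to $\M_f$, verify that the induced weights form a block-restricted distribution and that the induced ordering is a block-restricted adversary, observe that the matroid reward equals $\sum_i w_i z_i$ and that the two \OPT{}'s coincide, and conclude via \thmref{thm:matroid}. The only difference is cosmetic: you justify the equality of optima via integrality of the vertices of $\Pf$, whereas the paper simply notes that the weight of any set $S$ in $\M_f$ equals $w \cdot \cvec{S}$ and reads off both consequences directly.
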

\begin{proof}
Assume w.l.o.g.\ that $f$ is integer-valued; the extension to rational-valued functions follows by a trivial scaling argument. The matroid $\M_f$ is block-structured, and the weights of the elements $(u_i,j)$ generated by the reduction are sampled from a block-restricted distribution since the weights $w_1,\ldots,w_n$ are mutually independent random variables, and the weights in block $B_i$ are all equal to $w_i$.
Furthermore, the method for constructing the input sequence satisfies the definition of a block-restricted adversary, since the elements of each block appear consecutively.

For any cardinality vector $\mathbf{q}$, the total weight of any set $S \subseteq \U \times [M]$ such that $\cvec{S} = \mathbf{q}$ is equal to $w \cdot \mathbf{q}$. In particular, this implies: 
\begin{enumerate}
	\item The value of the vector $\mathbf{z}$ selected by our polymatroid algorithm is equal to the sum of weights of elements selected in its internal simulation of the matroid algorithm.
	\item The value of $\OPT$ in the matroid setting is equal to $\max \{ \sum_{i=1}^n w_i \cdot q_i \mid \mathbf{q} \in \Pf\}$, which coincides with $\OPT$ in the polymatroid setting.
\end{enumerate}
Combining these two facts with~\thmref{thm:matroid}, we obtain the performance guarantee in the theorem statement.
\end{proof}

\subsection{Proof of the Block-Restricted Matroid Prophet Inequality}\label{sec:proof}

We start with a proposition that generalizes the corresponding result of \citet{KlWe12} from independent weight distributions to block-restricted weight distributions. The proof is straightforward and given in \appref{app:balanced-pt-1}. 

\begin{proposition}\label{pro:balanced-pt-1}
For every input sequence $\sigma$, if $A = A(\sigma)$, then
\[
	\sum_{x_i \in A} T_i = \frac{1}{2} \cdot \E[w'(C(A))].
\]
\end{proposition}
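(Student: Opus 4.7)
The plan is a telescoping argument driven by the definition of $T_i$. I would begin by indexing the selected elements in order of selection: write $A = \{x_{i_1},\ldots,x_{i_k}\}$ with $i_1 < i_2 < \cdots < i_k$, and set $A_{i_0} := \emptyset$. The essential bookkeeping fact is that the running set $A_{\cdot}$ only changes at a selection time, so that between consecutive selections it is constant; in particular $A_{i_j - 1} = A_{i_{j-1}}$ and $A_{i_j} = A_{i_{j-1}} \cup \{x_{i_j}\}$ for each $j = 1,\ldots,k$.

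Next I would use the form of the threshold given in equation~(\ref{eq:t-2}), since it is the $C$-version that directly matches the right-hand side of the proposition. Summing over $j$,
\[
\sum_{x_i \in A} T_i \;=\; \sum_{j=1}^{k} T_{i_j} \;=\; \frac{1}{2}\sum_{j=1}^{k} \E\bigl[\,w'(C(A_{i_j})) - w'(C(A_{i_{j-1}}))\,\bigr].
\]
Because $w'$ is a single random object (sampled independently of $w$, not resampled at each step) and the sets $A_{i_j}$ are deterministic once $\sigma$ is fixed, linearity of expectation lets me pull the sum inside a common expectation, after which it telescopes to $\tfrac{1}{2}\,\E[\,w'(C(A)) - w'(C(\emptyset))\,]$.

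It then remains to show that $C(\emptyset) = \emptyset$. By definition, $R(\emptyset)$ is a subset $R$ of the $w'$-maximum-weight basis $B$ with the property that $\emptyset \cup R$ is a basis; since $R \subseteq B$ and $B$ is itself a basis, this forces $R(\emptyset) = B$ and therefore $C(\emptyset) = B \setminus B = \emptyset$. Substituting $w'(C(\emptyset)) = 0$ yields the claimed identity.

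There is really no hard step in this proposition: the proof is essentially a one-line telescope. The only care required is the bookkeeping behind that telescope, namely verifying $A_{i_j - 1} = A_{i_{j-1}}$ so the summands line up, and observing that because $w'$ is a single shared random object, the sum of the conditional expectations defining the $T_{i_j}$ legitimately rewrites as a single expectation of a sum.
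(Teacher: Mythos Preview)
Your proposal is correct and is essentially the same telescoping argument as the paper's proof: use the $C$-form of $T_i$ from~\eqref{eq:t-2}, apply linearity of expectation, and collapse the sum using $C(\emptyset)=\emptyset$. If anything, you are slightly more explicit than the paper about the bookkeeping step $A_{i_j-1}=A_{i_{j-1}}$ and about why $C(\emptyset)=\emptyset$.
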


\begin{figure}[ht]
  \center
  \begin{tikzpicture}
    \draw[->, semithick] (0,0) -- (8,0) node[right = 0.1cm] {$x$};
    \draw[->, semithick] (0,0) -- (0,4);
    \draw[-, semithick, dotted] (8,2) -- (0,2) node[left = 0.1cm] {$w$};
    \foreach \i in {0,1,2,3,4,5,6,7} {
      \node[draw,circle,inner sep=1pt,fill] at (\i,3-3/2^\i) {};
    }
    \foreach \i in {0,1,2} {
      \node[draw,circle,inner sep=2.5pt] at (\i,3-3/2^\i) {};
    }
    \foreach \i in {3,4,5,6,7} {
      \node[draw,circle,inner sep=2.5pt] at (\i,3-3/2^2) {};
    }
    \draw[-] (0,0) -- (0,0) node[below = 0.1cm] {$i_0$};
    \draw[-] (7,0) -- (7,0) node[below = 0.1cm] {$i_1-1$};
    \draw[-] (4,0) -- (4,0) node[below = 0.1cm] {$i$};
    \draw[-] (4,3-3/2^4) -- (4,3-3/2^4) node[above = 0.1cm] {$t_i$};
    \draw[-] (4,3-3/2^2) -- (4,3-3/2^2) node[below = 0.3cm] {$T_i$};
  \end{tikzpicture}
  \caption{Visualization of the thresholds set by the algorithm}
  \label{fig:thresholds}
\end{figure}
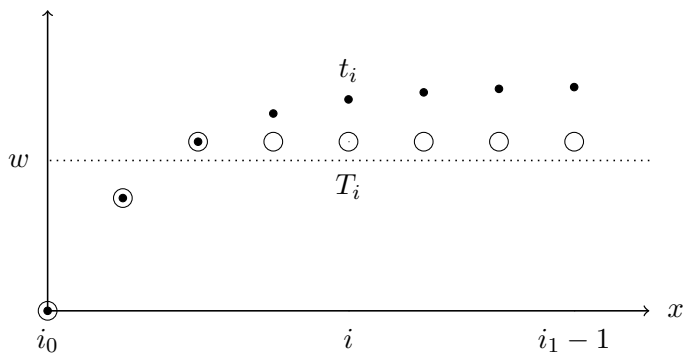

Next we prove that the thresholds within a given block have a specific form (see \figref{fig:thresholds} for an illustration). Specifically, consider any block consisting of elements $u_{i_0}, u_{i_0+1}, \dots, u_{i_1-1}$. For all $i_0 \le i \le i_1$ define $A^i = A_{i_0-1} \cup \{x_{i_0},\ldots,x_i\}$, and
\begin{equation} \label{eq:ti}
	t_i = \frac{1}{2} \cdot \E_{w'}\left[g(A^{i-1})-g(A^i)\right],
\end{equation}
where for convenience we also set $A^{i_0-1} = A_{i_0-1}$. We will show that the sequence of numbers defined by~\eqref{eq:ti} forms a non-decreasing sequence depending only on the weights associated with previous elements $w_{1}, w_{2}, \dots, w_{i_0-1}$, and that for $i_0 \le i \le i_1-1$ the algorithm sets threshold $T_i = t_i$ if $t_i \le w$ and $T_i > w$ otherwise. 

\begin{lemma}\label{lem:thresholds}
Consider a block-structured matroid $(\U,\I)$ with blocks $B_1,\ldots,B_n$. For any input sequence $\sigma$ generated by a block-restricted adversary, and any block $B_j$, let $i_0, i_0+1, \ldots, i_1$ denote the times when the elements of $B_j$ are presented in $\sigma$. The sequence of numbers $t_{i_0}, \ldots, t_{i_1}$ defined by~\eqref{eq:ti} satisfies $t_{i_0} \leq t_{i_0+1} \leq \dots \leq t_{i_1}$ and depends only on the subsequence of $\sigma$ preceding time $i_0$. Moreover, the algorithm sets $T_i = t_i$ for all $i_0 \le i \le i_1$ such that $t_i \le w_i$, and $T_i > w_i$ otherwise.
\end{lemma}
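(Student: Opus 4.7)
The plan is to reduce everything to a single well-defined function $G : \naturals^n \to \reals$ that records $\E_{w'}[g(A)]$ as a function of the cardinality vector $\cvec{A}$ alone, and then read off all three claims of the lemma from properties of $G$.

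First I would argue that $\E_{w'}[g(A)]$ depends only on $\cvec{A}$. Any permutation $\pi$ of $\U$ that preserves each block maps independent sets to independent sets (because the matroid is block-structured), and with a canonical tie-breaking convention for the $w'$-maximal basis $B$, the joint law of $(w',B)$ is invariant under $\pi$ (because the weight distribution is block-restricted). Since $g(A) = w'(R(A))$ is defined entirely in terms of $A$ and $(w',B)$, averaging over $w'$ yields a quantity invariant under block-preserving $\pi$, which is therefore a function $G(\cvec{A})$ of the cardinality vector alone. Using $A^i = A_{i_0-1} \cup \{x_{i_0},\dots,x_i\}$ we have $\cvec{A^{i-1}} = \cvec{A_{i_0-1}} + (i-i_0)\,e_j$, so
\[
t_i \;=\; \frac{1}{2}\bigl(G(\cvec{A_{i_0-1}} + (i-i_0)\,e_j) - G(\cvec{A_{i_0-1}} + (i-i_0+1)\,e_j)\bigr),
\]
which depends on $\sigma$ only through $\cvec{A_{i_0-1}}$; this settles the ``depends only on the subsequence preceding time $i_0$'' claim.

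Next I would prove the monotonicity $t_{i_0} \le \dots \le t_{i_1}$ by showing that $G$ is concave along direction $e_j$. Since all elements of block $B_k$ carry the same $w'$-weight, call it $W'_k$, for any $A$ with $\cvec{A}=q$ one can rewrite $g(A)$ as the value of an LP over $\Pf$:
\[
g(A) \;=\; \max\{\,W' \cdot r \midd q + r \in \Pf,\; |q+r|_1 = f(\U),\; r \ge 0\,\}.
\]
Substituting $s = q + r$ exhibits this as a maximization LP whose only $q$-dependence is in the right-hand sides of the constraints $s \ge q$; standard LP sensitivity analysis makes the optimal value concave in $q$, and concavity is preserved by $\E_{w'}$. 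Discrete concavity of $G$ in direction $e_j$ is equivalent to the statement that $G(q) - G(q+e_j)$ is non-decreasing in $q_j$, yielding $t_{i_0} \le t_{i_0+1} \le \dots \le t_{i_1}$.

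Finally I would relate the algorithm's thresholds $T_i$ to the $t_i$'s. Since the distribution is block-restricted, all elements of $B_j$ share a common realized weight $w$. So long as the algorithm has accepted every element of $B_j$ seen so far, $A_{i-1} = A^{i-1}$ and $T_i = t_i$ by definition; the algorithm accepts exactly when $t_i \le w$. At the first index $i^*$ in the block where $t_{i^*} > w$, the algorithm rejects and $A_{i^*} = A_{i^*-1}$. For every later index $i > i^*$ in the block, the set $A_{i-1} \cup \{x_i\}$ has the same cardinality vector as $A^{i^*}$, so by the invariance established in step 1, $T_i = t_{i^*} > w$, forcing another rejection and leaving $A$ unchanged. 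Thus $T_i = t_i$ on accepted indices and $T_i > w_i$ on rejected ones. The step I anticipate being the main obstacle is the concavity of $G$: the LP-sensitivity argument is clean in outline but requires a careful tie-breaking convention when defining the $w'$-maximal basis $B$ to make the block-permutation invariance of $\E_{w'}[g(A)]$ fully rigorous.
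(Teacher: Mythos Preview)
Your framework in steps 1 and 3 is sound and essentially matches the paper. The paper also establishes (implicitly) that $\E_{w'}[g(\cdot)]$ depends only on the cardinality vector, and its induction for the ``rejected'' phase is exactly your argument: once the first rejection occurs at $i^*$, every later $A_{i-1}\cup\{x_i\}$ has the same cardinality vector as $A^{i^*}$, so $T_i=t_{i^*}>w$.

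The gap is in step 2. Your LP does \emph{not} compute $g(A)$. By the paper's definition, $R(A)$ is the maximum-$w'$-weight subset of $B$ with $A\cup R$ a basis; nothing forces $R(A)\cap A=\emptyset$, and in fact every element of $A\cap B$ can be included in $R$ for free (it does not change $A\cup R$), so $A\cap B\subseteq R(A)$. Consequently $\cvec{A}+\cvec{R(A)}$ need not lie in $\Pf$, and your LP can strictly undercount $g(A)$. For a concrete instance, take the rank-2 uniform matroid on $\{a_1,a_2,b_1,b_2\}$ with blocks $\{a_1,a_2\}$ and $\{b_1,b_2\}$, weights $W'_1=3$, $W'_2=5$, so $B=\{b_1,b_2\}$. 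For $A=\{b_1\}$ (cardinality vector $q=(0,1)$) one has $R(A)=\{b_1,b_2\}$ and $g(A)=10$, whereas your LP at $q=(0,1)$ yields $\max\{W'\cdot r: r\ge 0,\ q+r\text{ a base}\}=5$. So the LP-sensitivity concavity does not transfer to $G$, and the monotonicity of $t_i$ is not established by your argument as written.

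The paper proves monotonicity by a short pointwise combinatorial argument instead: for each $w'$, adding one element of block $B_j$ to $A$ deletes from $R(A)$ a single \emph{critical element}, namely the first element (in decreasing $w'$ order) that becomes spanned by $A$ together with the earlier elements. Adding a second element of $B_j$ produces a critical element no later in that order, hence of weight at least as large, giving $g(A^{i-1})-g(A^i)\le g(A^i)-g(A^{i+1})$ pointwise. An equivalent one-line fix, closer in spirit to your approach, is to invoke the submodularity of $g$ on independent sets (Lemma~3 of Kleinberg--Weinberg) together with your step-1 invariance: with $X=A^{i-1}$, $Y=A^i$, $z=x_{i+1}$, submodularity gives $g(A^{i-1}\cup\{x_{i+1}\})-g(A^{i-1})\ge g(A^{i+1})-g(A^i)$, and the left side equals $g(A^i)-g(A^{i-1})$ since $A^{i-1}\cup\{x_{i+1}\}$ and $A^i$ have the same cardinality vector.
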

\begin{proof}
By definition $t_i$ depends only on $A_{i_0-1}$ and $i$, thus, only on the subsequence of $\sigma$ preceding time $i_0$.

To prove that $t_i \le t_{i+1}$ for $i_0 \le i < i_1$ we will show that the inequality 
\begin{equation} \label{eq:gdiff}
g(A^{i-1}) - g(A^i) \geq g(A^i) - g(A^{i+1})
\end{equation}
holds pointwise (\ie for every choice of $w'$) and not just in expectation. Recall that $g(S) = w'(R(S))$ and note that $A^{i-1} \subset A^{i} \subset A^{i+1}$, with each set in the chain containing one more element of $B_j$ than the preceding one.

Now consider that for any independent set $S$, the set $R(S)$ is formed by going through the elements of $\M$ in decreasing order of $w'$, selecting every element that is not spanned\footnote{In a matroid, we say that $x$ is spanned by $T$ if $T \cup \{x\}$ has a maximal independent set that is disjoint from $\{x\}$.} by the union of $S$ with the earlier elements in the list. Consequently, for any $S \in \I$ and $x \not\in S$, we have $R(S \cup \{x\}) \subset R(S)$ and the unique element of $R(S) \setminus R(S \cup \{x\})$ is the first element that is spanned by earlier elements combined with $S \cup \{x\}$ but not $S$; let us call this the \emph{critical element} for $(S,x)$.
Let $S = A^{i-1}, \, S \cup \{x\} = A^{i}, \, S \cup \{x,y\} = A^{i+1}$. We find that the first time an element is spanned by earlier elements combined with $S \cup \{x\}$ it is also spanned by earlier elements combined with $S \cup \{x,y\}$, and consequently the critical element for $(S \cup \{x\}, y)$ occurs in the same place or earlier than the critical element for $(S, x)$. Consequently, the critical element for $(S \cup \{x\}, y)$ has the same or greater weight than the critical element for $(S,x)$, \ie
$$ g(A^{i}) - g(A^{i+1}) \geq g(A^{i-1}) - g(A^{i})$$ 
as desired.

Having proven that $t_i$ is monotonically non-decreasing in $i$, we shall now prove that for all $i_0 \le i \le i_1$ such that $t_i \le w_i$, the algorithm sets $T_i = t_i$ and selects $i$. The proof is by induction on $i$. From the definition of $T_i$ and $t_i$, it is clear that $T_i = t_i$ provided that the algorithm has selected $i_0,\ldots,i-1$. Thus $T_{i_0}=t_{i_0}$ (the base case) and for $i_0 < i \le i_1$ such that $t_i \le w_i$ the induction hypothesis implies that the algorithm has already selected $i_0,\ldots,i-1$ thus establishing $T_i = t_i$. Now by the algorithm's selection criterion, the relation $T_i = t_i \leq w_i$ implies then $i$ is selected, which concludes the proof of the base case and induction step.

To conclude the proof of the lemma, we consider the case of $i$ such that $t_i > w_i$. Let $i_2$ denote the least such $i$. We will prove, again by induction on $i$, that $T_i = T_{i_2} = t_{i_2} > w_{i_2} = w_i$ for all such $i$. Since the algorithm has selected elements $i_0,\ldots,i_2-1$, we have $T_{i_2} = t_{i_2}$ which establishes the base case. For the induction step, note that the induction hypothesis implies that the algorithm does not select any elements in the range $i_2,\ldots,i-1$. Consequently $T_i = \frac12 \cdot \E[g(A^{i_2-1}) - g(A^{i_2-1} \cup \{x_i\})]$. The relation $A^{i_2-1} \cup \{x_i\} = A^{i_2-1} \cup \{x_{i_2}\}$ implies that $g(A^{i_2-1} \cup \{x_i\}) = g(A^{i_2-1} \cup \{x_{i_2}\})$ and hence $T_i = T_{i_2}$ which completes the induction step.
\end{proof}

An important corollary of the preceding structural result regarding the thresholds is the following assertion for two weight assignments $w, w'$ drawn independently from a block-restricted weight distribution.

\begin{corollary}\label{cor:fix}
Let $w, w'$ be two weight assignments drawn independently from a block-restricted weight distribution. For any input sequence $\sigma$ generated by a block-restricted adversary, and any block $B_j$, let $i_0, i_0+1, \ldots, i_1$ denote the times when the elements of $B_j$ are presented in $\sigma$. 
Then, for all $i_0 \le i < i_1$, $(w_i-T_i)^+ = (w_i-t_i)^+$, and $w_i, t_i, w'(x_i)$ are mutually independent, so
\[
	\E[(w_i-T_i)^+] = \E[(w_i-t_i)^+] = \E[(w'(x_i)-t_i)^+]. 
\]
\end{corollary}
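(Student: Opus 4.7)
The plan is to decompose Corollary~\ref{cor:fix} into its three assertions and verify each in turn, in each case leaning directly on Lemma~\ref{lem:thresholds} and on the structure of the block-restricted distribution. First, for the pointwise identity $(w_i - T_i)^+ = (w_i - t_i)^+$, I will do a case split using the second half of Lemma~\ref{lem:thresholds}. If $t_i \le w_i$, the lemma gives $T_i = t_i$ and the equality is immediate. If instead $t_i > w_i$, then $(w_i - t_i)^+ = 0$, and the lemma gives $T_i > w_i$, so $(w_i - T_i)^+ = 0$ as well; in either case the two positive parts agree pointwise.

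Next, to establish mutual independence of $w_i$, $t_i$, and $w'(x_i)$, I will assemble three observations. By the first half of Lemma~\ref{lem:thresholds}, $t_i$ is a deterministic function of the subsequence of $\sigma$ preceding time $i_0$, which in turn depends only on the $w$-weights of blocks presented strictly before $B_j$ (since a block-restricted adversary only uses previously revealed weights to decide which block comes next). By definition of a block-restricted weight distribution, the common weight of block $B_j$ under $w$ --- which is exactly $w_i$ --- is independent of the $w$-weights of all other blocks, and hence $w_i$ is independent of $t_i$. Finally, $w'$ is independent of $w$ by hypothesis, so $w'(x_i)$, being a function of $w'$ alone, is independent of the pair $(w_i, t_i)$. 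Combining these gives the claimed mutual independence.

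The final equality of expectations is then an easy consequence: $w_i$ and $w'(x_i)$ are each equal in distribution to the common block weight of $B_j$ under the given block-restricted distribution, and each is independent of $t_i$, so conditioning on $t_i$ and using the identical marginals yields $\E[(w_i - t_i)^+] = \E[(w'(x_i) - t_i)^+]$. I do not anticipate a serious obstacle here; the heavy lifting has already been absorbed into Lemma~\ref{lem:thresholds}, which establishes both the monotonicity of the $t_i$ sequence and its measurability with respect to the prefix preceding $B_j$. Corollary~\ref{cor:fix} is essentially a bookkeeping step whose purpose is to let subsequent arguments first replace $T_i$ by $t_i$ and then swap $w_i$ for its independent copy $w'(x_i)$, thereby decoupling the threshold comparison from the weight that triggers the selection.
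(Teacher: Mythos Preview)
Your proposal is correct and follows exactly the approach the paper intends: the paper gives no explicit proof of Corollary~\ref{cor:fix}, treating it as an immediate consequence of Lemma~\ref{lem:thresholds}, and your argument spells out precisely the case split and independence verifications that make this immediate. There is nothing to add or to compare.
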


Finally, before proving \thmref{thm:matroid} we need to prove an inequality analogous to Proposition 2 of \citep{KlWe12}, but using the surrogate thresholds $t_i$ in place of the algorithm's actual thresholds $T_i$.

\begin{proposition}\label{pro:balanced-pt-2}
For every input sequence $\sigma$ generated by a block-restricted adversary, let $A = A(\sigma)$, and let $R'(A)$ be a set such that $\cvec{R'(A)} = \cvec{R(A)}$ and $R'(A)$ contains the earliest $|R(A) \cap B_j|$ elements of each block $B_j \; (1 \le j \le n)$.
Then 
\[ 
       \sum_{x_i \in R'(A)} t_i \le \frac12 \E[w'(R'(A))] = \frac12 \E[w'(R(A))].
\]
\end{proposition}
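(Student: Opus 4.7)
I plan to split the statement into the easy equality $\E[w'(R'(A))] = \E[w'(R(A))]$ and the main inequality. The equality uses only the block-restricted structure: since all elements of a block $B_j$ receive a common weight $w'_j$ under $w'$, one has $w'(S) = \sum_j w'_j \cdot \card{j}{S}$, so $w'(S)$ depends on $S$ only through $\cvec{S}$. Because $\cvec{R'(A)} = \cvec{R(A)}$ by construction, $w'(R'(A)) = w'(R(A))$ holds pointwise in $w'$, and the equality of expectations follows.

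For the inequality, I would partition the left-hand sum by blocks. Fix a block $B_j$ with positions $i_0, \ldots, i_1-1$ in $\sigma$, and let $n_j = |R(A) \cap B_j|$. Because $R'(A)$ keeps the earliest $n_j$ elements of the block, the block's contribution to the LHS is $\sum_{k=i_0}^{i_0+n_j-1} t_k$, while its contribution to $w'(R'(A))$ is $n_j \cdot w'_j$. Two ingredients from Lemma~\ref{lem:thresholds} drive the per-block argument: (i) $t_{i_0} \le t_{i_0+1} \le \cdots \le t_{i_1-1}$, and (ii) pointwise in $w'$, the differences $g(A^{k-1}) - g(A^k)$ are non-increasing in $k$ within the block. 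Fact (ii) telescopes to the pointwise bound $g(A^{i_0-1}) - g(A^{i_0+m-1}) \ge m\bigl[g(A^{i_0+m-2}) - g(A^{i_0+m-1})\bigr]$ for each deterministic $m$; combining it with $2\,t_{i_0+m-1} = \E_{w'}[g(A^{i_0+m-2}) - g(A^{i_0+m-1}) \mid \mathcal{F}_{i_0-1}]$ and with fact (i) (which yields $S_m := \sum_{k=i_0}^{i_0+m-1} t_k \le m \cdot t_{i_0+m-1}$) gives $S_m \le \tfrac{1}{2}\E_{w'}[g(A^{i_0-1}) - g(A^{i_0+m-1}) \mid \mathcal{F}_{i_0-1}]$ for each deterministic $m$.

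The main obstacle is that $n_j$ itself depends on $w'$, so the block contribution $\sum_{k=i_0}^{i_0+n_j-1} t_k = S_{n_j}$ cannot be telescoped naively. I plan to address this by conditioning on $\mathcal{F}_{i_0-1}$ (under which the $t_k$ and the sets $A^\ell$ are deterministic) and then averaging over the conditional distribution of $n_j$ induced by $w'$. The quantity $g(A^{i_0-1}) - g(A^{i_0+n_j-1})$ is exactly the total $w'$-weight of the critical elements displaced from $R(\cdot)$ as the first $n_j$ elements of $B_j$ are folded into $A$; because $R(A)$ is chosen to maximize $w'(R)$ among admissible extensions, I expect that on the event $\{n_j = m\}$ this displaced weight is dominated in conditional expectation by $m \cdot w'_j$. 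Summing the resulting per-block bounds over $j$ would then yield $\E_{w'}[\sum_{x_i \in R'(A)} t_i] \le \tfrac{1}{2}\E[w'(R'(A))]$, the desired inequality.
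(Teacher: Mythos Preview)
Your proposal has one minor error and one genuine gap.

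The minor error: your claim (ii) has the monotonicity backwards. The pointwise differences $g(A^{k-1})-g(A^k)$ are \emph{non-decreasing} in $k$ within a block --- this is precisely what forces $t_k$ to be non-decreasing, since $t_k$ is half the expectation of that difference. So your chain $S_m \le m\,t_{i_0+m-1} \le \tfrac12\,\E_{w'}[g(A^{i_0-1})-g(A^{i_0+m-1})]$ breaks at the second step. The detour is in any case unnecessary: for deterministic $m$ the telescoping gives the \emph{equality} $S_m = \tfrac12\,\E_{w'}[g(A^{i_0-1})-g(A^{i_0+m-1})]$ directly.

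The genuine gap is exactly the obstacle you flagged, and your proposed remedy does not close it. The displaced weight $g(A^{i_0(j)-1})-g(A^{i_0(j)-1}\cup R'_j)$ is the $w'$-weight of elements knocked out of $R(A_{i_0(j)-1})$; those elements may come from \emph{any} block and may each exceed $w'_j$, so there is no reason for a per-block bound by $n_j\cdot w'_j$, conditionally or otherwise. Summing over $j$ does not help, because the reference sets $A_{i_0(j)-1}$ differ across $j$ and the displaced elements can double-count. The paper sidesteps the random-$n_j$ issue altogether by proving the inequality \emph{pointwise in $w'$}: it shows that for every fixed $w'$,
\[
\sum_{x_i\in R'(A)}\bigl[g(A^{i-1})-g(A^i)\bigr]\ \le\ w'(R(A)),
\]
so both the index set and the summands are governed by the same $w'$. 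The two missing ideas are (a) submodularity of $g$ restricted to independent sets (Lemma~3 of \cite{KlWe12}), used together with a block-preserving permutation $\wpm$ sending $R(A)$ to $R'(A)$, to replace each block-dependent set $A_{i_0(j)-1}$ by the common set $\wpm(A)$; and (b) the identity $R(A\cup R_j)=R(A)\setminus R_j$, which makes the per-block contributions disjoint and sum exactly to $w'(R(A))$. Your argument never invokes the submodularity of $g$, and without it the block pieces do not fit together.
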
 
\begin{proof}
Recalling the definition of $t_i$ in~\eqref{eq:ti}, we see that it suffices to prove that the following holds pointwise (\ie for every choice of $w'$).
\begin{equation} \label{eq:bal-pt-2.0}
      \sum_{x_i \in R'(A)} g(A^{i-1}) - g(A^i) \le w'(R'(A)) = w'(R(A)).
\end{equation}
The equation $w'(R'(A)) = w'(R(A))$ is an immediate consequence of the fact that $\cvec{R'(A)} = \cvec{R(A)}$ and $w'(S) = w' \cdot \cvec{S}$ for any set $S$.

Let $\wpm$ be a permutation of $\M$ that preserves each block and maps $R(A)$ to $R'(A)$. Note that $A \cup R(A) \in \I$ so $\wpm(A) \cup R'(A) \in \I$ as well, since $\M$ is block-structured. To bound the left side of~\eqref{eq:bal-pt-2.0}, we break up the sum into separate sums, one for each block of $\M$. 
For $j=1,\ldots,n$ let $i_0(j)$ denote the initial index of block $B_j$, and let $R_j = R(A) \cap B_j, \; R'_j = R'(A) \cap B_j$. We have
\begin{align}
\nonumber
      \sum_{j=1}^n \sum_{x_i \in R'_j} g(A^{i-1}) - g(A^i)
  & =
      \sum_{j=1}^n g(A^{i_0(j)}) - g(A^{i_0(j)} \cup R'_j) \\
\nonumber  & \leq
      \sum_{j=1}^n g(\wpm(A)) - g(\wpm(A) \cup R'_j) \\
\nonumber  & =
      \sum_{j=1}^n w'(R(\wpm(A))) - w'(R(\wpm(A) \cup R'_j)) \\
\label{eq:bal-pt-2.1}  & = 
      \sum_{j=1}^n w'(R(A)) - w'(R(A \cup R_j)).
\end{align}
The second line follows from the fact that the restriction of $g$ to the independent set $\wpm(A) \cup R'(A)$ is submodular (Lemma 3 of \citep{KlWe12}) and $\wpm(A)$ is disjoint from $R'(A)$. The last line follows from the preceding one by applying the weight-preserving matroid automorphism $\wpm^{-1}$ to all the sets involved.

Observe that $R(A \cup R_j) = R(A) \setminus R_j$. This is because $R(A \cup R_j)$ is the maximum-weight subset $R \subseteq B$ such that $A \cup R_j \cup R$ is a basis of $\M$, and $R(A) \setminus R_j$ is one such subset. Furthermore, if there were any other set $R$ such that $A \cup R_j \cup R$ were a matroid basis and $w'(R) > w'(R(A) \setminus R_j)$ then $R_j \cup R$ would have greater weight than $R(A)$, contradicting the definition of $R(A)$.
Plugging the relation $R(A \cup R_j) = R(A) \setminus R_j$ into the right side of~\eqref{eq:bal-pt-2.1}, we find that
\[
	\sum_{j=1}^n w'(R(A)) - w'(R(A \cup R_j))
	= \sum_{j=1}^n w'(R_j)
	= w' \left( \bigcup_{j=1}^n R_j \right)
	= w'(R(A)),
\]
which completes the proof of~\eqref{eq:bal-pt-2.0} and hence of the entire proposition.
\end{proof}

We are now ready to prove \thmref{thm:matroid}.

\begin{proof}[Proof of \thmref{thm:matroid}]
Since $C(A) \cup R(A)$ is a maximum-weight basis with respect to $w'$, and $w'$ and $w$ are identically distributed we have
\[
	\OPT = \E[w'(C(A)) + w'(R(A))].
\]
Recalling the set $R'(A)$ from \proref{pro:balanced-pt-2}, we will derive the following inequalities:
\begin{align}
	&\E\left[\sum_{x_i \in A} T_i\right] \ge \frac{1}{2} \cdot \E[w'(C(A))], &&\text{and} \label{eq:a}\\
	&\E\left[\sum_{x_i \in A} (w(x_i)-T_i)^+\right] \ge \E\left[\sum_{x_i \in R'(A)} (w'(x_i) - t_i)^+\right], &&\text{and} \label{eq:b}\\
	&\E\left[\sum_{x_i \in R'(A)} (w'(x_i)-t_i)^+\right] \ge \frac{1}{2} \cdot \E\left[w'(R(A))\right]. \label{eq:c}
\end{align} 
By adding inequalities (\ref{eq:b}) and (\ref{eq:c}) to inequality (\ref{eq:a}) and using the fact that $T_i + (w(x_i)-T_i)^+ = w(x_i)$ for all $x_i \in A$, we obtain  
\[
	\E[w(A)] \ge \frac{1}{2} \cdot \E[w'(C(A))] + \frac{1}{2} \cdot \E[w'(R(A))].
\]

Inequality (\ref{eq:a}) follows from \proref{pro:balanced-pt-1}.
For inequality (\ref{eq:b}) we use \corref{cor:fix} and that the algorithm picks every $i$ such that $w(x_i) > T_i$ to obtain
\[
	\E\left[ \sum_{i \in A} (w(x_i) - T_i)^+ \right] = \E\left[\sum_{i=1}^{n}(w(x_i)-T_i)^+\right] = \E\left[\sum_{i=1}^{n} (w'(x_i) - t_i)^+\right] \ge \E\left[\sum_{x_i \in R'(A)}(w'(x_i) - t_i)^+\right]. \label{eq:breaks}
\]
For inequality (\ref{eq:c}) we apply \proref{pro:balanced-pt-2} to obtain
\begin{align*}
	\E\left[\sum_{x_i \in R(A)} w'(x_i)\right] 
        &= \E\left[\sum_{x_i \in R'(A)} w'(x_i) \right] \\
	&\leq \E\left[\sum_{x_i \in R'(A)} t_i\right] + \E\left[\sum_{x_i \in R'(A)} (w'(x_i) - t_i)^+\right]\\
	&\leq \frac{1}{2} \cdot \E\left[\sum_{x_i \in R(A)} w'(x_i)\right] + \E\left[\sum_{x_i \in R'(A)} (w'(x_i) - t_i)^+\right].
\end{align*}
Adding $-\frac12 \cdot \E[\sum_{x_i \in R(A)} w'(x_i)]$ on both sides finishes the proof.
\end{proof}

\section{Applications in Mechanism Design}\label{sec:applications}

Prophet inequalities are an important tool for the design of simple yet approximately optimal mechanisms \citep{ChHa10}. Previously known prophet inequalities could not be applied to settings in which the set of feasible solutions forms a polymatroid. Our prophet inequality thus leads to improved results for these settings.



In a {\em polymatroid single-parameter Bayesian mechanism design problem} we are given a set $\U$ of $n$ agents that strive to be serviced. Each agent $i$ has a private value $v_i\in \reals^+$ for being serviced. The value $v_i$ of agent $i$ is drawn from the cumulative distribution function $F_i$. A mechanism $(x,p)$ consists of an outcome rule $x: \reals_+^n \rightarrow \reals_+^n$, where $x_i$ specifies how much service agent $i$ gets, and a payment rule $p: \reals_+^n \rightarrow \reals_+^n$, where $p_i$ specifies the payment of agent $i$. An outcome is feasible if $\sum_{i \in S} x_i \le f(S)$ for all $S \subseteq \U$, where $f$ is a submodular function. If $f$ is rational valued we say that the problem has rational constraints. An agent's utility is linear in the quantity of service it receives and its payment. That is, agent $i$'s utility is $u_i(b,v_i) = v_i \cdot x_i(b) - p_i(b)$, where $b$ denotes the bids of the agents. The social welfare is $\sum_{i \in \U} v_i \cdot x_i(b)$ and the revenue is $\sum_{i \in \U} p_i(b)$. A mechanism is dominant strategy incentive compatible if for every agent $i$, value $v_i$, bid $b_i$ and bids $b_{-i}$, $u_i((v_i,b_{-i}),v_i) \ge u_i((b_i,b_{-i}),v_i).$

A number of polymatroid single-parameter Bayesian mechanism design problems are given in \citet{BidV11}. The following two are from \citet{GoMi12} and \citet{BaNi09} and can be used to model sponsored search and video on demand. 

\begin{itemize}
\item {\bf Position Auctions:} There are $n$ agents and $m$ instances. Each advertiser $i$ is interested in a subset of instances $\Gamma(i) \subseteq [m]$. For each instance $k$ let $\Gamma(k)$ denote the agents that are interested in it. Each instance is associated with $|\Gamma(k)|$ positions. Position $j$ for instance $k$ has quality $\alpha_j^k$ such that $\alpha_1^k \ge \alpha_2^k \ge \dots \ge \alpha_{|\Gamma(k)|}^k$. Let $\mathcal{A}_k = \{ \pi_k: \Gamma(k) \rightarrow [|\Gamma(k)|]\}$ denote the set of allocations (one to one maps) of agents to instances. Let $\Delta(\mathcal{A}_k)$ denote the distributions over such allocations. Allocation $x=(x_1,\dots,x_n)$ is feasible if there is a distribution over allocations of agents to positions for each instance such that agent $i$ gets $x_i$ clicks in expectation. This is a polymatroid \cite{McDi75}.
\item {\bf Spatial Markets:} There are $n$ agents and one seller. There is a capacitated network in which each edge has a capacity. The agents correspond to disjoint sets of demand nodes. The seller corresponds to a source node. The agents are interested in the sum of flows $x_i$ into their nodes $i$. A solution $x$ is feasible if and only if $\sum_{e \in S} x_e \le f(S)$ for all $S$, where $f(S)$ is the value of a minimum $s$-$S$-cut. This is a polymatroid \citep{FeGr86}. 
\end{itemize}

These problems have rational constraints if the qualities and capacities are rational-valued, which is a reasonable assumption in the sponsored search and video on demand application. In the former the qualities correspond to clicks in the latter the capacities correspond to the number of videos that can be simultaneously streamed.


By applying the sequential posted pricing technique of \citet{ChHa10}, and using our prophet inequality for polymatroids we obtain simple dominant-strategy incentive compatible mechanisms that are guaranteed to achieve at least half of the optimal revenue. For the position auctions problem this is better than the best known bound for another simple mechanism, Generalized Second Price (GSP) with reserve prices \cite{LuPa12,CaKa12}. For the spatial markets problem we are not aware of any approximation results.

\begin{theorem}
For polymatroid single-parameter Bayesian mechanism design problems with rational constraints the revenue obtained by sequential posted pricing is within a factor of two of the optimal revenue.
\end{theorem}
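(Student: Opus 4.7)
The plan is to follow the sequential-posted-pricing reduction of \citet{ChHa10}, combining Myerson's theory of optimal single-parameter auctions with the polymatroid prophet inequality just proved. Let $\phi_i$ denote Myerson's (ironed) virtual valuation function for agent $i$, and set $w_i = \phi_i(v_i)^+$. By Myerson's characterization of expected revenue, and since a revenue-optimal allocation never serves an agent with negative virtual value,
\[
	\OPT \;=\; \E\!\left[\max_{x \in \Pf} \sum_i \phi_i(v_i)^+ \cdot x_i\right] \;=\; \E\!\left[\max_{x \in \Pf} w \cdot x\right],
\]
where $\OPT$ denotes the optimal expected revenue. The $w_i$ are independent non-negative random variables (by independence of the $v_i$), so the hypotheses of the polymatroid prophet inequality are met.

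The second step is to invoke that theorem on $(w_1,\ldots,w_n)$, obtaining an online algorithm that produces $x^{\mathrm{on}} \in \Pf$ satisfying $\E[w \cdot x^{\mathrm{on}}] \ge \tfrac{1}{2}\OPT$. By \lemref{lem:thresholds}, while the internal block-matroid algorithm processes the block $B_i$ corresponding to agent $i$, it uses a non-decreasing sequence of thresholds $t_{i,1} \le t_{i,2} \le \cdots \le t_{i,M}$ that depend only on information revealed before agent $i$'s arrival, and it allocates to agent $i$ the largest integer $k$ with $t_{i,k} \le w_i$.

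The third step is to turn this online algorithm into a sequential posted-price mechanism: when agent $i$ is visited, announce the per-unit price schedule $p_{i,k} = \phi_i^{-1}(t_{i,k})$ (with $p_{i,k} = \infty$ if $t_{i,k}$ exceeds the range of $\phi_i$, and with the standard ironing bookkeeping of \citet{ChHa10} when $F_i$ is irregular). Because $\phi_i$ is monotone non-decreasing, an agent with true value $v_i$ maximizes utility by buying the largest $k$ with $v_i \ge p_{i,k}$, which is exactly the quantity allocated by the prophet-inequality algorithm on $w_i = \phi_i(v_i)^+$. The mechanism is dominant-strategy incentive compatible because each agent faces a fixed posted-price schedule independent of her own report. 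Myerson's payment identity then yields
\[
	\E[\text{revenue}] \;=\; \E\!\left[\sum_i \phi_i(v_i) \cdot x_i^{\mathrm{on}}\right] \;=\; \E[w \cdot x^{\mathrm{on}}] \;\ge\; \tfrac{1}{2}\OPT,
\]
as required.

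The only real subtlety, rather than a genuine obstacle, is the careful handling of ironing for irregular $F_i$ and the verification that although each threshold $t_{i,k}$ is a random variable (a function of the history up to agent $i$), it is independent of $v_i$ and hence may legitimately be published as a posted price before agent $i$ reports; both points are standard in the Chawla-Hartline framework.
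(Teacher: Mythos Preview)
The paper does not give a detailed proof of this theorem; it simply asserts that the result follows by combining the sequential posted-pricing framework of \citet{ChHa10} with the polymatroid prophet inequality. Your proposal correctly fills in exactly those details---Myerson virtual values, the prophet inequality applied to $w_i=\phi_i(v_i)^+$, and the conversion of the block thresholds from \lemref{lem:thresholds} into per-unit posted prices---and thus matches the paper's intended argument.
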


\bibliographystyle{abbrvnat}
\bibliography{abb,polymatroid}

\begin{thebibliography}{22}
\providecommand{\natexlab}[1]{#1}
\providecommand{\url}[1]{\texttt{#1}}
\expandafter\ifx\csname urlstyle\endcsname\relax
  \providecommand{\doi}[1]{doi: #1}\else
  \providecommand{\doi}{doi: \begingroup \urlstyle{rm}\Url}\fi

\bibitem[Alaei(2011)]{Al11}
S.~Alaei.
\newblock Bayesian combinatorial auctions: Expanding single buyer mechanisms to
  many buyers.
\newblock In \emph{Proceedings of the 52nd Symposium on Foundations of Computer
  Science}, pages 512--521, 2011.

\bibitem[Babaioff et~al.(2007)Babaioff, Immorlica, and Kleinberg]{BaIm07}
M.~Babaioff, N.~Immorlica, and R.~Kleinberg.
\newblock Matroids, secretary problems, and online mechanisms.
\newblock In \emph{Proceedings of the 18th Annual ACM-SIAM Symposium on
  Discrete Algorithms}, pages 434--443, 2007.

\bibitem[Babaioff et~al.(2009)Babaioff, Nisan, and Pavlov]{BaNi09}
M.~Babaioff, N.~Nisan, and E.~Pavlov.
\newblock Mechanisms for a spatially distributed market.
\newblock \emph{Games and Economic Behavior}, 66\penalty0 (2):\penalty0
  660--684, 2009.

\bibitem[Bikhchandani et~al.(2011)Bikhchandani, {de Vries}, Schummer, and
  Vohra]{BidV11}
S.~Bikhchandani, S.~{de Vries}, J.~Schummer, and R.~V. Vohra.
\newblock An ascending vickrey auction for selling bases of a matroid.
\newblock \emph{Operations Research}, 59\penalty0 (2):\penalty0 400--413, 2011.

\bibitem[Boshuizen(1992)]{Bo91}
F.~Boshuizen.
\newblock Multivariate prophet inequalitites for negatively dependent random
  vectors.
\newblock \emph{Contemporary Mathematics}, 125:\penalty0 183--190, 1992.

\bibitem[Caragiannis et~al.(2012)Caragiannis, Kaklamanis, Kanellopoulos, and
  Kyropoulou]{CaKa12}
I.~Caragiannis, C.~Kaklamanis, P.~Kanellopoulos, and M.~Kyropoulou.
\newblock Revenue guarantees in sponsored search auctions.
\newblock In \emph{Proceedings of the 20th Euopean Symposium on Algorithms},
  pages 253--264, 2012.

\bibitem[Chawla et~al.(2010)Chawla, Hartline, Malec, and Sivan]{ChHa10}
S.~Chawla, J.~D. Hartline, D.~L. Malec, and B.~Sivan.
\newblock Multi-parameter mechanism design and sequential posted pricing.
\newblock In \emph{Proceedings of the 41th Annual ACM Symposium on Theory of
  Computing}, pages 311–--320, 2010.

\bibitem[Cox and Kertz(1986)]{CoKe86}
D.~C. Cox and R.~P. Kertz.
\newblock Prophet regions and sharp inequalities for $p^\mathrm{th}$ absolute
  moments of martingales.
\newblock \emph{Journal of Multivariate Analysis}, 18\penalty0 (2):\penalty0
  242--273, 1986.

\bibitem[Dean et~al.(2004)Dean, Goemans, and Vondr\'ak]{DeGo04}
B.~C. Dean, M.~X. Goemans, and J.~Vondr\'ak.
\newblock Approximating the stochastic knapsack problem: The benefit of
  adaptivity.
\newblock In \emph{Proceedings of the 45th Symposium on Foundations of Computer
  Science}, pages 208--217, 2004.

\bibitem[Federgruen and Groenevelt(1986)]{FeGr86}
A.~Federgruen and H.~Groenevelt.
\newblock Optimal flows in networks with multiple sources and sinks, with
  applications to oil and gas lease investment programs.
\newblock \emph{Operations Research}, 34:\penalty0 218--225, 1986.

\bibitem[Goel et~al.(2012)Goel, Mirrokni, and {Paes Leme}]{GoMi12}
G.~Goel, V.~S. Mirrokni, and R.~{Paes Leme}.
\newblock Polyhedral clinching auctions and the adwords polytope.
\newblock In \emph{Proceedings of the 44th Annual ACM Symposium on Theory of
  Computing}, pages 107--122, 2012.

\bibitem[Guha and Munagala(2009)]{GuMu09}
S.~Guha and K.~Munagala.
\newblock Multi-armed bandits with metric switching costs.
\newblock In \emph{Proceedings of the 36th International Colloquium on
  Automata, Languages and Programming}, pages 496--507, 2009.

\bibitem[Guha et~al.(2010)Guha, Munagala, and Shi]{GuMu10}
S.~Guha, K.~Munagala, and P.~Shi.
\newblock Approximation algorithms for restless bandit problems.
\newblock \emph{Journal of the ACM}, 58:\penalty0 3:1--3:50, 2010.

\bibitem[Hajiaghayi et~al.(2007)Hajiaghayi, Kleinberg, and Sandholm]{HaKl07}
M.~Hajiaghayi, R.~Kleinberg, and T.~W. Sandholm.
\newblock Automated mechanism design and prophet inequalities.
\newblock In \emph{Proceedings of the 22nd AAAI Conference on Artificial
  Intelligence}, pages 58--65, 2007.

\bibitem[Hill and Kertz(1983)]{HiKe83}
T.~P. Hill and R.~P. Kertz.
\newblock Stop rule inequalities for uniformly bounded sequences of random
  variables.
\newblock \emph{Transactions of the American Mathematical Society},
  278\penalty0 (1):\penalty0 197--207, 1983.

\bibitem[Kleinberg and Weinberg(2012)]{KlWe12}
R.~Kleinberg and S.~M. Weinberg.
\newblock Matroid prophet inequalities.
\newblock In \emph{Proceedings of the 44th Annual ACM Symposium on Theory of
  Computing}, pages 123--136, 2012.

\bibitem[Krengel and Sucheston(1977)]{KrSu77}
U.~Krengel and L.~Sucheston.
\newblock Semiamarts and finite values.
\newblock \emph{Bulletin of the American Mathematical Society}, 83:\penalty0
  745--747, 1977.

\bibitem[Krengel and Sucheston(1978)]{KrSu78}
U.~Krengel and L.~Sucheston.
\newblock On semiamarts, amarts, and processes with finite value.
\newblock \emph{Advances in Probability and Related Topics}, 4:\penalty0
  197--266, 1978.

\bibitem[Lucier et~al.(2012)Lucier, {Paes Leme}, and Tardos]{LuPa12}
B.~Lucier, R.~{Paes Leme}, and {\'E}.~Tardos.
\newblock On revenue in the generalized second price auction.
\newblock In \emph{Proceedings of the 21st International Conference on World
  Wide Web}, pages 361--370, 2012.

\bibitem[McDiarmid(1975)]{McDi75}
C.~J.~H. McDiarmid.
\newblock Rado's theorem for polymatroids.
\newblock \emph{Mathematical Proceedings of the Cambridge Philosophical
  Society}, 78\penalty0 (2):\penalty0 263--281, 1975.

\bibitem[Rinott and Samuel-Cahn(1987)]{RiSaCa87}
Y.~Rinott and E.~Samuel-Cahn.
\newblock Comparisons of optimal stopping values and prophet inequalities for
  negatively dependent random variables.
\newblock \emph{Annals of Statistics}, 15\penalty0 (4):\penalty0 1482--1490,
  1987.

\bibitem[Samuel-Cahn(1984)]{SaCa84}
E.~Samuel-Cahn.
\newblock Comparison of threshold stop rules and maximum for independent
  nonnegative random variables.
\newblock \emph{Annals of Probability}, 12:\penalty0 1213--1216, 1984.

\end{thebibliography}

\appendix

\section{Proof of \proref{pro:balanced-pt-1}}\label{app:balanced-pt-1}

We use linearity of expectation and a telescoping sum to obtain,
\begin{align*}
	\sum_{x_i \in A} T_i 
	&= \frac{1}{2} \cdot \sum_{x_i \in A} \E[w'(C(A_{i-1} \cup \{x_i\})) - w'(C(A_{i-1}))]\\
	&= \frac{1}{2} \cdot \sum_{x_i \in A} \E[w'(C(A_i)) - w'(C(A_{i-1}))]\\
	&= \frac{1}{2} \cdot \E[w'(c(A_n)) - w'(c(A_0))]\\
	&= \frac{1}{2} \cdot \E[w'(C(A))]. \qedhere
\end{align*}

\end{document}